\newcommand\reallywidehat[1]{%
\savestack{\tmpbox}{\stretchto{%
  \scaleto{%
    \scalerel*[\widthof{\ensuremath{#1}}]{\kern-.6pt\bigwedge\kern-.6pt}%
    {\rule[-\textheight/2]{1ex}{\textheight}}
  }{\textheight}%
}{0.5ex}}%
\stackon[1pt]{#1}{\tmpbox}%
}
\newcommand*{\mathfn}{\fontfamily{cmss}\selectfont}
\newcounter{protocol}
\title{Trustless, privacy-preserving blockchain bridges}
\author{Drew Stone}
\affil{\textit{drew@webb.tools}}
\date{}
\begin{document}
\maketitle
\begin{abstract}
     In this paper, we present a protocol for facilitating trust-less cross-chain cryptocurrency transfers that preserve privacy of bridge withdrawals. We leverage zero-knowledge primitives that are commonly used to design cryptocurrency mixing protocols to provide similar functionality but across two or more blockchains. To that end, we receive cryptocurrency mixing "for free" through the bridge operations and describe how to extend these protocols to incentivise bridge transfers using ideas from \cite{le2020amrautonomous}. We describe how resulting protocols lead to similar \textit{vampire} attacks coined in the Uniswap vs. Sushiswap saga but across chains and discuss incentivisation mechanisms.
\end{abstract}

\section{Introduction}
Cross-chain bridges are popular mechanisms for transferring native cryptocurrency assets to new blockchains. Oftentimes, bridges serve to bring new liquidity in the form of new cryptocurrency assets to a new or existing blockchain protocol. Other attractions for using cross-chain bridges are to utilise certain cryptocurrencies with more expressive scripting engines, such as utilising bridged Bitcoin \cite{nakamoto2012bitcoin} assets on Ethereum \cite{wood2014ethereum} in various smart contracts. The range of applications utilising bridged assets ranges from simple exchange between new currency pairs to various liquidity mining schemes and more.

The mechanisms for building bridges exist in both centralized (\textit{custodial}) and decentralized (\textit{non-custodial}) forms. Custodial systems however are objectively worse in theory albeit simpler to implement in practice because a user entrusts their cryptocurrency with another, potentially malicious, entity. For that purpose, it is ever more crucial to build secure and robust, decentralized blockchain bridges to transfer cryptocurrencies across ledgers.

The main ingredient for building decentralized blockchain bridges is a \textit{light-client}, a system that verifies proofs of work of other blockchains. If both blockchains being bridged can support the existence of a light-clients for each respective chain, then a bridge can be built to transfer cryptocurrencies between them.

\subsection{Coin Mixers}
Cryptocurrency mixers have been popular since the creation of Bitcoin \cite{nakamoto2012bitcoin}. A cryptocurrency mixer is a system that allows holders of cryptocurrency to anonymize their cryptocurrency holdings by depositing fixed sized amounts and withdrawing fixed sized amounts to new addresses such that linking these actions is considered hard. The goal of a cryptocurrency mixer is to anonymize any link between deposits (inputs) and withdrawals (outputs),

A mixer is only as good as its anonymity set, defined as the set of potential deposits a withdrawal could link to. It is crucial to grow the anonymity set in practice to achieve large levels of privacy.

\subsection{DeFi}
The rise of decentralized finance gives us interesting mechanisms for bootstrapping large anonymity sets. DeFi "farming" schemes present new ways for incentivising users to participate in new cryptocurrency protocols by giving early users governance tokens that offers voting rights and/or direct cash-flows within the protocol. We cast the design of private bridge protocols in a similar light, distributing a useless tokens to participants of the bridge based on certain criteria such as the time one remains locked in one side of the bridge.
 
 \subsection{Our Contributions}
 In this paper, we combine these three concepts, \textit{light clients}, \textit{mixers}, and DeFi incentivisation techniques to build a trustless and private cryptocurrency bridge. We outline our protocol in the form of a smart contract that manages the state and functionality necessary for both a light client and a bridge. We achieve private bridge transfers by utilising succinct zero-knowledge proofs to verify the validity of withdrawals and prevent double-spending. While we utilise light clients within our system, we defer formalisation of light client work to previous research and focus primarily on the bridge mechanism.
 
 \subsection{Related Work}
 \textbf{Light clients} are well studied mechanisms for making blockchains interoperable. One main crtieria for designing light clients is ensuring they are trustless, through the ability to verify proofs of work and consensus of other blockchains. Light clients are required to store data, in the form of block headers, of these other blockchains. Over time, the storage costs of these headers because non-negligable and so more thoughtful designs are necessary. Proposals such as NiPoPoWs \cite{cryptoeprint:2017:963}, FlyClients \cite{cryptoeprint:2019:226}, and Plumo \cite{gabizon2020plumo} present ways of optimising light-client designs using zero-knowledge proofs. We draw inspiration from these designs to design a protocol for our specific application.

\noindent 
\textbf{Privacy in cryptocurrencies} is well-studied from both a mechanism design and analysis perspective.  Much work has gone into designing private and scalable cryptocurrency protocols using zero-knowledge proofs, starting from the original ZeroCash work \cite{sasson2014zerocash} to the initial Monero work \cite{cryptoeprint:2015:1098} which utilises linkable ring signatures to facilitate a cryptocurrency protocol. Cryptocurrency mixers on he other hand present ways of adding privacy to blockchain protocols without such functionality baked in. Prposals such as Zether \cite{bunz2019zether} aim to build ZeroCash-like functionality on Ethereum \cite{wood2014ethereum} in a smart contract. Similar in spirit, we design our protocol on muliple smart contracts.

An example of a cryptocurrency mixer-like system on Bitcoin \cite{nakamoto2012bitcoin} is the CoinJoin protocol \cite{coinjoin} for Bitcoin and any UTXO based blockchain. CoinJoin allows users to combine many transaction inputs and outputs within a single transaction to obfuscate the flow of funds.
\section{Preliminaries}
We denote by $1^\lambda$ the security parameter of our protocols and {\mathfn negl}$(\lambda)$ a negligable function in $\lambda$. We denote by $\cdot || \cdot$ the concatenation of two elements, in binary string representations.

We will restrict our focus to operating over a prime, finite-field $\mathbb{F}_p$ for prime number $p\in \mathbb{Z}$. In our protocols our prime number $p$ is on the order of $\lambda$ bits.  We denote PPT as polynomial, probabilistic time.

\subsection{Hash functions}
We let $H: \{0,1\}^*\rightarrow \mathbb{F}_p$ be a cryptographic hash function that maps binary strings to elements in $\mathbb{F}_p$. We denote by $\xleftarrow[]{\text{\$}}$ a random sampling from a set.

\begin{definition}
A family of cryptographic hash functions $\mathcal{H}$ is called \textbf{collision-resistant} if $\forall$ PPT adversaries $\mathcal{A}$, given $H\xleftarrow[]{\text{\$}}\mathcal{H}$, the probability that $\mathcal{A}$ finds $x,y$ such that $H(x)=H(y)$ is {\mathfn negl}$(\lambda)$.
\end{definition}

\subsection{zkSNARKs}
A zero-knowledge Succinct Non-interactive Argument of Knowledge (zkSNARK) is a computation proof of an NP relation $\mathcal{R}$ that allows a prover $\mathcal{P}$ the ability to demonstrate knowledge of a witness $w$ for a statement $x$ to a verifier $\mathcal{V}$ without disclosing any information about the witness $w$. We aim to use a zkSNARK protocol that achieves \textit{zero-knowledge}, \textit{succinctness}, \textit{soundness}, and \textit{completeness}.
\begin{itemize}
    \item \textbf{\textit{Zero-knowledge}}: Proofs $\pi$ generated by $\mathcal{P}$ for $(x,w)\in\mathcal{R}$ disclose no information about the witness $w$.
    \item \textbf{\textit{Succinctness}}: Proofs $\pi$ generated by $\mathcal{P}$ for $(x,w)\in\mathcal{R}$ are small and do not scale with the complexity of $\mathcal{R}$.
    \item \textbf{\textit{Soundness}}: If $\mathcal{V}$ accepts a proof $\pi$ generated by $\mathcal{P}$, with probability $1-\text{\mathfn negl}(\lambda)$ $\mathcal{P}$ knows $(x,w)\in\mathcal{R}$.
    \item \textbf{\textit{Completeness}}:  If $\mathcal{P}$ knows $(x,w)\in\mathcal{R}$ then they can generate a proof $\pi$ that $\mathcal{V}$ accepts with probability $1-\text{\mathfn negl}(\lambda)$.
\end{itemize}

We utilise zkSNARKs for arithmetic circuits outlined in the original ZeroCash \cite{sasson2014zerocash} protocol and \cite{le2020amrautonomous}. Transformations from NP relations $\mathcal{R}$ to arithmetic circuits allow an individual to encode circuits whose satisfiability rests on knowing elements of $\mathcal{R}$. We denote the circuit generated by $\mathcal{R}$ as $C$.

Statements $x$ in our relation $\mathcal{R}$ have both public and private inputs due to the ability to hide parts of a satisfying witness $w$ for $x$. We will describe public parameters and public inputs for our zero-knowledge proof system using this $\text{\mathfn font}$.

\begin{definition}
A zkSNARK protocol for an arithmetic circuit $C$ is composed of a trio of algorithms: (\textsc{Setup, Prove, Verify}).
\begin{itemize}
    \item {\mathfn (pp) }$\longleftarrow {\textsc{Setup}}(1^\lambda, C)$ generates a set of public parameters $\text{{\mathfn pp}}\in\mathbb{F}_p^k$ where $k$ is a parameter derived from the circuit $C$.
    \item $\pi\longleftarrow {\textsc{Prove}(\text{\mathfn pp}, x, w)}$ generates a proof proof $\pi$ using a statement $x$ and witness $w$.
    \item $b\longleftarrow {\textsc{Verify}(\text{\mathfn pp}, x, \pi)}$ outputs a boolean value $b\in\{0,1\}$ depending on if the proof $\pi$ is satisfying for the statement $x$.
\end{itemize}
\end{definition}
We require this protocol to provide \textit{soundness}, \textit{completeness}, and \textit{zero-knowledge} as well as \textit{succinctness} and \textit{simulation-extractability}. We refer the reader to \cite{groth2016size} for formal explanations of these properties.

\subsection{Merkle Trees}
We utilise binary, merkle trees as the foundation of our mixing protocols. We denote preimages of leaves in our merkle tree as $z\in\{0,1\}^*$. Leaves in the tree are elements $y=H(z)\in\mathbb{F}_p$ for a collision-resistant hash function $H\xleftarrow[]{\$}\mathcal{H}$. Building the tree follows the standard algorithm of repeatedly hashing pairs of child elements until a single element, the merkle root, remains.

We denote by $I$ an instance of a merkle tree and its merkle root $\text{\mathfn MR}_I$ and refer to the state of the merkle root and instance at time $t$ as {\mathfn MR}$^t_I$ and $I^t$. We will drop the $t$ when the context is clear. We define a function $\text{\textsc{path}}_I(y)$ to be the merkle proof path for a leaf $y$.
\begin{definition}
A merkle tree instance $I$ is defined with the algorithms {\textsc{(Setup, Add, Verify, Prove, Verify\_Snark)}}.
\begin{itemize}
    \item $(\text{\mathfn pp}, I)\longleftarrow\textsc{Setup}(h, \lambda, C)$ generates a merkle tree instance $I$ with height $h$ and public paramters for a merkle tree root reconstruction circuit $C$ with security parameter $\lambda$. For this, we run \textsc{ZK.Setup}$(\lambda, C)$ from our zkSNARK protocol as a subroutine.
    \item $b\longleftarrow\textsc{Add}(y, I)$ adds a leaf $y\in \mathbb{F}_p$ to the merkle tree instance $I$, returning a bit for failure/success $b\in \{0,1\}$.
    \item $b\longleftarrow\textsc{Verify}(y,\textsc{path}_{I^t}(y), I^t)$ verifies a merkle path proof for a leaf $y$ in a merkle tree instance $I^t$ by re-computing a merkle root $\text{\mathfn {MR}}_B$ and returning a bit for failure/success $b\in \{0,1\}$ if $\text{\mathfn {MR}}_B = \text{\mathfn MR}^t_{I}$.
    \item $\pi_I\longleftarrow\textsc{Prove}( \text{\mathfn pp}, \text{\mathfn sn}, r, s,I)$ generates a proof $\pi_I$ for the following relation:
    \[
      \mathcal{R}_{\text{\mathfn MR}_I} = \left\{ (x,w)\ \middle\vert \begin{array}{l}
        w=(r,s,\textsc{path}_I(H(r||s)))~\land~\text{\mathfn sn} = H(r) \\
        x := \textsc{Verify}(H(r||s),\textsc{path}(H(r||s), \text{\mathfn MR}_I)=1  
      \end{array}\right\}
    \]
    We generate a proof for this relation using $\textsc{ZK.Prove}(\text{\mathfn pp}, x, w, \text{\mathfn MR}^t_I)$ as a subroutine. {\mathfn MR}$^t_I$ is a public input, $w$ is a private input.
    \item $b\longleftarrow\textsc{Verify\_Snark}(\text{\mathfn pp}, \text{\mathfn sn},\pi_I,I)$ verifies a zero-knowledge proof $\pi_I$ by running \textsc{ZK.Verify}$(\text{\mathfn pp},x,\pi_I, \text{\mathfn MR}_I, \text{\mathfn sn})$ from our zkSNARK protocol as a subroutine. A merkle root {\mathfn MR}$_I$ and nullifier {\mathfn sn} are provided as public inputs to the verifier.
\end{itemize}
\end{definition}

\noindent\textbf{Note:} We will eventually abuse the notation for \textsc{Prove} and \textsc{Verify\_Snark} to satisfy a zkSNARK that verifies the statement representing the OR relation of $\mathcal{R}_I$ but with 2 merkle roots of merkle tree instances, $\mathcal{R}_{\text{\mathfn MR}_A} \cup \mathcal{R}_{\text{\mathfn MR}_B}$. For this we add extra inputs to our functional definitions.
\begin{proposition}
The (\textsc{Setup, Prove, Verify\_Snark}) algorithms define a zkSNARK protocol that preserves \textit{soundness, completeness, zero-knowledge, succinctness, and witness-extractability} properties.
\end{proposition}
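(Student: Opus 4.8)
The plan is to observe that the triple $(\textsc{Setup}, \textsc{Prove}, \textsc{Verify\_Snark})$ is nothing more than the underlying zkSNARK $(\textsc{ZK.Setup}, \textsc{ZK.Prove}, \textsc{ZK.Verify})$ specialised to the fixed arithmetic circuit $C$ that encodes the relation $\mathcal{R}_{\text{\mathfn MR}_I}$, so the bulk of the argument is a property-by-property reduction to the guarantees we have already assumed for the zkSNARK. First I would confirm that $\mathcal{R}_{\text{\mathfn MR}_I}$ is a genuine NP relation admitting an arithmetic-circuit encoding: the witness $(r, s, \textsc{path}_I(H(r||s)))$ has size polynomial in the tree height $h$, and both the hash evaluations $H(r||s)$, $H(r)$ and the layer-by-layer Merkle root reconstruction are expressible as arithmetic constraints over $\mathbb{F}_p$ (this is precisely what a SNARK-friendly choice of $H$ buys us). This licenses the single invocation of $\textsc{ZK.Setup}(\lambda, C)$ inside $\textsc{Setup}$ and fixes the public parameters {\mathfn pp} once and for all.

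With the circuit in hand, three of the five properties transfer by composition with essentially no new content. For \emph{completeness}, an honest prover holding a satisfying witness feeds it to $\textsc{ZK.Prove}$, which by completeness of the zkSNARK yields a proof that $\textsc{ZK.Verify}$, and hence $\textsc{Verify\_Snark}$, accepts with probability $1-\text{\mathfn negl}(\lambda)$. For \emph{succinctness}, the proof $\pi_I$ is literally the zkSNARK proof, whose length and verification cost are independent of $|C|$ by assumption. For \emph{zero-knowledge}, the public statement consists only of the public inputs {\mathfn MR}$_I$ and {\mathfn sn}, while the sensitive data $(r,s,\textsc{path})$ reside entirely in the private witness; invoking the zkSNARK simulator on the same public inputs produces a proof indistinguishable from the honest one, so no information about $(r,s)$ leaks.

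The two properties that require care are \emph{soundness} and \emph{witness-extractability}, and I expect the latter to be the main obstacle. Both follow formally from the corresponding zkSNARK guarantees, and in fact plain witness-extractability follows a fortiori from the \emph{simulation-extractability} we assumed for the underlying zkSNARK: from any prover that makes $\textsc{Verify\_Snark}$ accept with non-negligible probability, the extractor $\mathcal{E}$ outputs a witness $w = (r, s, \textsc{path})$ satisfying $C$ for the given public inputs. The subtlety is interpreting what this extracted witness certifies. Circuit satisfiability alone guarantees only that the supplied path reconstructs to {\mathfn MR}$_I$; to upgrade this into the statement the bridge actually relies on, namely that the leaf $H(r||s)$ was genuinely inserted into $I$, I would invoke the collision-resistance of $H \xleftarrow[]{\text{\$}} \mathcal{H}$. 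Concretely, if the extracted path corresponded to a leaf never added to $I$, then comparing it against the authentic authentication path for the same root would expose two distinct preimages hashing to a common internal value, contradicting collision-resistance. Thus the composition of the zkSNARK extractor with a collision-finding reduction delivers the desired witness-extractability, and the negligible terms contributed by the zkSNARK guarantees and the collision-resistance bound aggregate into a single $\text{\mathfn negl}(\lambda)$, completing the argument.
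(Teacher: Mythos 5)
Your proposal is correct and takes essentially the same route as the paper: both arguments reduce the claim to the assumed guarantees of the underlying zkSNARK, with the sole substantive condition being that the SNARK-friendly hash $H$ (e.g.\ MiMC or Poseidon) admits a polynomial-size arithmetic-circuit encoding of the relation $\mathcal{R}_{\text{\mathfn MR}_I}$. The paper's proof simply delegates the property-by-property transfer, and the collision-resistance argument binding the extracted witness to genuine tree membership, to the ZeroCash construction by citation, so your write-up is a more explicit rendering of the same argument rather than a different one.
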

\begin{proof}
It follows from \cite{sasson2014zerocash} that we can build an arithmetic circuit for the relation $R$ as well as generate succinct proofs that preserve \textit{soundness}, \textit{completeness}, \textit{zero-knowledge}, and which are \textit{simulation-extractable} as long as our collision-resistant hash function $H$ admits a circuit that has polynomially many constraints and is polynomial-time verifiable. This is not a strict requirement as there exists many so called SNARK-friendly hash functions with low arithmetic complexity defined over prime fields, such as MiMC \cite{albrecht2016mimc} and Poseidon \cite{grassi2020poseidon}. We refer to these papers for proofs of their polynomial time verifiability.
\end{proof}

\subsection{Smart contracts}
A smart contract is a piece of software that is executed within the execution context of a blockchain protocol. A smart contract is considered "smart" because it allows for generalised computational that is often turing-complete. For background on smart contract blockchains, refer to Ethereum \cite{wood2014ethereum}, which was the first to implement these primitives.

For our purposes, we assume smart contracts are stateful programs with $O(1)$ read/write access. We can instantiate a variety of algorithms on these smart contracts and interact with them through a peer-to-peer network.

\section{Model}
We restrict our attention to two smart-contract blockchain protocols $A$ and $B$ and a private bridge between them referenced by a smart contract $S$. Extending this to $n>2$ chains requires replicating $S$ on more chains in a similar fashion as is described. $S$ exists on both $A$ and $B$ but may manage different states, we denote these differences by $S_A$ and $S_B$. We will also omit these subscripts when we discuss shared functionality. To that end, we require that the smart-contract engines for $A$ and $B$ are \textit{compatible}; they support the same cryptographic primitives and computational programs necessary for deploying and using $S$. For the rest of the paper, we will assume that $A$ and $B$ are built using the same blockchain platform.

We require the existence of light-clients on $A$ and $B$ of one another, such that an external user can verify proofs of transactions and state of blockchain $B$ on $A$ and vice versa with high probability of success $1-\text{\mathfn negl}(\lambda)$. These light-clients have a delay $D$ until which transactions and state can be considered $safe$ by users on the bridged chain. Our delay $D$ is bounded in the presence of a single honest light-client relayer who cannot be censored. $S$ supports this functionality through the algorithms $\textsc{(Setup, Add\_Header, Add\_Bridge\_State)}$. We will describe these algorithms in later sections.

The smart contract $S$ supports algorithms $\textsc{(Setup, Deposit, Withdraw)}$ in order to operate the bridge. $S$ is also capable of owning a non-zero balance of cryptocurrency of each blockchain. In a typical blockchain environment, users have balances and spend them. We also assume that $S$ has the functionality to own a balance and spend tokens on blockchains $A$ and $B$. This functionality will be encoded into $S$.

\section{Protocol}
We let $S$ be a smart contract that supports a light-client functionality and bridge functionality. Light client functionality means that $S$ can process and verify proofs of work or finality of other blockchains, in order to track their consensus states. In a bridged network with $S$, light-client relayers share proofs of work as well as information about bridge operations -- deposits and withdrawals -- for $A,S_A$ on $S_B$ and vice versa.

$S$ uses two types of bridge related state. The first is a merkle tree instance $I$ with fixed depth or height $h$. This tree supports the API defined above, with the exception that \textsc{Prove} does not need to exist on $S$. The second type are arrays, which we will use to store the serial numbers {\mathfn sn} supplied when a user verifies a zero-knowledge proof on $S_A$ and $S_B$ through relaying state back and forth. We also have some state for handling light-client functionality. Each of these data structures are initialised as empty. We denote the cryptocurrency being sent over our bridge as $\$T$, without loss of generality $\$T$ can be native to either blockchain.

With this simplistic infrastructure deployed and operational, the high-level protocol works as follows: A user $u$ decides to transfer a non-zero number of tokens $\$T$ to blockchain $B$.

\begin{enumerate}
    \itemsep0em 
    \item A mechanism designer constructs the circuit $C$ for merkle tree membership verification and selects the hash function $H\xleftarrow[]{\$}\mathcal{H}$.
    \item $S$ is deployed to both $A$ and $B$. We generate public parameters {\mathfn pp} for both using $\text{\mathfn MT}.\textsc{Setup}(h,\lambda, C)$ and setup $S$ with a genesis state for the light-client.
    \item $u$ generates random values $r,s\xleftarrow[]{\$}\mathbb{F}_p$
    \item $u$ interacts with $S_A$ and deposits a fixed sized deposit $d$ of $\$T$ into $S_A$ and appends a merkle leaf $z$ to $I_A$, the merkle tree instance on $A$, using \textsc{Add}$(z,I_A)$.
    \item $u$ generates a proof $\pi_{I_A}$ of their deposit using $\text{\mathfn MT}.\textsc{Prove}( \text{\mathfn pp}, \text{\mathfn sn}, r, s,I_A, I_B)$.
    \item $u$ may submit proofs to withdraw their tokens after a delay $D'$.
    \item $u$ withdraws from $S_B$, minting a \textit{wrapped} $\$T$ on $B$, by submitting the proof generated previously, using $\text{\mathfn MT}.\textsc{Verify\_Snark}(\text{\mathfn pp}, \text{\mathfn sn},\pi_I,\text{\mathfn MR}_A, \text{\mathfn MR}_B)$.
\end{enumerate}

With this informal description we now define the smart contract facilitating this protocol and prove relevant properties about it.

\subsection{Smart contract design}
\begin{tcolorbox}[fonttitle=\bfseries, title=Bridge Contract $S_A$ for blockchain]
\textit{// Bridge related state variables} \\
\textbf{Merkletree} $I_A$ \\
$\mathbb{F}_p[~]$ {\mathfn nullifiers} \\
$\{0,1\}^{256}[~]$ {\mathfn A\_roots} \\
$\{0,1\}^{256}[~]$ {\mathfn B\_roots} \\ \\
\textit{// Light client related state variables} \\
$\{0,1\}^*[~]$ {\mathfn B\_headers} \\
$\{0,1\}$ {\mathfn initialised} = \textbf{false} \\ \\
\textsc{\underline{Setup}}$(B_0,h,\lambda,C)$ \\
\textsc{\underline{Add\_Header}}$(B_i,\phi)$ \\
\textsc{\underline{Add\_Bridge\_State}}$(\mathcal{C}, \mathcal{S},k)$ \\
\textsc{\underline{Deposit}}$(v,z)$ \\
\textsc{\underline{Withdraw}}($\pi_I, \text{\mathfn MR}^{t}_A, \text{\mathfn MR}^{t'}_B, \text{\mathfn sn}$) \\
\end{tcolorbox}

\begin{definition}
We let {\mathfn nullifiers} be a set of public nullifiers exposed on $S_A$ and relayed to $S_A$ by the light-client relayers when new proofs are submitted for verification on $S_B$.
\end{definition}
\begin{definition}
We let {\mathfn A\_roots} be an array of roots that represent intermediate states of $I_A$. Elements of this collection are denoted by {\mathfn MR}$^t_{A}$ for an arbitrary index $t$.
\end{definition}
\begin{definition}
We let {\mathfn B\_roots} be an array of roots that represent intermediate states of $I_B$. Elements of this collection are denoted by {\mathfn MR}$^t_{B}$ for an arbitrary index $t$.
\end{definition}
\begin{definition}
We let {\mathfn B\_roots} be an array of headers of $B$ that have been verified by the light client functionality.
\end{definition}

\noindent Descriptions of our interface follow:
\begin{itemize}
    \item \textsc{\underline{Setup}}$(B_0,h,\lambda,C)$ - Calls {\mathfn MT.}\textsc{Setup}$(h, \lambda, C)$ and initialises the first block header $B$ by adding to {\mathfn B\_headers} and {\mathfn B\_roots}.
    \item \textsc{\underline{Add\_Header}}$(B_i,\phi)$ - A function verifying a proof of a proof of work $\varphi$ for a new block header $B_i$.
    \item \textsc{\underline{Add\_Bridge\_State}}$(\mathcal{C}, \mathcal{S}, k)$ - A function to verify updates to $S_B$'s merkle tree and add valid roots to {\mathfn B\_roots}. This uses the header stored at {\mathfn B\_headers}$[k]$ when the light client is initialised.
    \item \textsc{\underline{Deposit}}$(v,z)$ - Validates the deposit $v$ and calls {\mathfn MT}.\textsc{Add}$(z,I)$. This function deducts a balance of $v$ coins from the sender.
    \item \textsc{\underline{Withdraw}}($\pi_I, \text{\mathfn MR}_A, \text{\mathfn MR}_B, \text{\mathfn sn}$) - Validates the merkle roots are valid, i.e. that they exist in the respective collections {\mathfn A\_roots, B\_roots} and also validates that $\text{\mathfn sn}\notin \text{\mathfn nullifiers}$. Then, it calls $\text{\mathfn MT}.\textsc{Verify\_Snark}$ with the arguments $(\text{\mathfn pp}_I, \text{\mathfn sn},\pi_I,\text{\mathfn MR}_A, \text{\mathfn MR}_B)$. If it succeeds, we withdraw funds to the sender.
\end{itemize}

\begin{proposition}
There exists a time $D'\geq D$ such that a user can transfer tokens from $S_A$ to $S_B.$ and withdraw them successfully on $S_B$.
\end{proposition}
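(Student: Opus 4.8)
The plan is to prove this as a liveness (correctness) statement by tracing a single honest user's tokens through the complete protocol execution, from the deposit on $S_A$ to a successful withdrawal on $S_B$, and to identify $D'$ as the time by which every prerequisite piece of cross-chain state has propagated. Concretely, I would show that each of the protocol steps completes and that, once $D'$ has elapsed, all three guards inside \textsc{Withdraw} are satisfied simultaneously, so the contract releases the funds.

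First I would fix the deposit: after \textsc{Deposit}$(v,z)$ executes on $S_A$ with $z = H(r||s)$, the leaf $z$ is appended to $I_A$ via \textsc{Add}, producing a new merkle root {\mathfn MR}$^t_A$ that is recorded on $S_A$. Next I would invoke the light-client relay guaranteed by the model: since $D$ is bounded in the presence of a single honest, uncensorable relayer, within time $D$ the block header of $A$ carrying this updated state is verified on $B$ through \textsc{Add\_Header}, and the corresponding root {\mathfn MR}$^t_A$ is inserted into the {\mathfn A\_roots} collection maintained by $S_B$ via \textsc{Add\_Bridge\_State}. I would then take $D' \geq D$ to be any time by which both the header verification and the bridge-state update have been processed and are deemed safe by $S_B$; this is exactly where the hypothesis $D' \geq D$ is consumed.

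With the post-deposit root available on $S_B$, the remaining steps reduce to completeness. The user holds a valid witness $w = (r, s, \textsc{path}_{I_A}(H(r||s)))$ and can form the statement $x$ together with the nullifier {\mathfn sn} $= H(r)$; by the completeness of the zkSNARK established in the preceding Proposition, $\text{\mathfn MT}.\textsc{Prove}$ yields a proof $\pi_I$ that \textsc{Verify\_Snark} accepts with probability $1 - \text{\mathfn negl}(\lambda)$. I would then check the three conditions enforced by \textsc{Withdraw}: the supplied root {\mathfn MR}$^t_A$ lies in {\mathfn A\_roots} by the relay step, the nullifier {\mathfn sn} is absent from {\mathfn nullifiers} because this is the coin's first withdrawal, and \textsc{Verify\_Snark} returns $1$ by completeness. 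Hence the contract releases the wrapped $\$T$ to the user, establishing the claim.

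The main obstacle is the relay-timing step: the entire existence claim for $D'$ rests on the liveness of cross-chain state propagation, so the crux is to argue rigorously that under the single-honest-relayer assumption the \textsc{Add\_Header} and \textsc{Add\_Bridge\_State} updates terminate within bounded time and cannot be indefinitely censored. Everything downstream of that is a mechanical unfolding of the completeness guarantees of the merkle-tree and zkSNARK subroutines. The one further subtlety requiring care is ensuring the user proves against the \emph{specific} relayed root {\mathfn MR}$^t_A$ reflecting their deposit rather than a stale root, which is precisely why $D'$ must be chosen after the relay of that post-deposit root rather than merely after $D$ measured from protocol genesis.
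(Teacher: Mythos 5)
Your proposal is correct and takes essentially the same route as the paper's own proof: deposit on $S_A$, use the honest-relayer assumption to conclude that the post-deposit root of $I_A$ is recorded in the relayed root collection within time $D$, then invoke completeness of the zkSNARK so that \textsc{Verify\_Snark} accepts with probability $1-\mathrm{negl}(\lambda)$. You are somewhat more explicit than the paper --- spelling out all three \textsc{Withdraw} guards (root membership, nullifier freshness, proof verification) and flagging that the proof must be made against the specific post-deposit relayed root rather than a stale one --- but these are elaborations of the same argument, not a different approach.
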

\begin{proof}
Once $u$ successfully executes \textsc{Deposit} at time $t$ on $S_A$, we know that in time $D$, the merkle root $\text{\mathfn MR}^{t+D}_A$ will be recorded in the array {\mathfn A\_roots}. Thus at time $t+D$, $u$ can generate a proof $\pi$ for the membership of their deposit in either the root of the native merkle instance OR one collected in the auxiliary storage. Due to the completeness of our protocol, $u$ can successfully convince the verifier on the smart contract function \text{Verify\_Snark} with probably at least $1-\text{\mathfn negl}(\lambda)$.
\end{proof}
\subsection{Withdrawals}
Once a user $u$ deposits value into the tree, it is imperative that they cannot double-withdraw from the system, a problem similar to the double spending problem. We prevent this with the following observation and modification.
\begin{itemize}
    \item \textbf{Observation}: A user $u$ can only double spend if the nullifiers {\mathfn sn} used have not been relayed to the bridged blockchain. Since we assume relayed information arrives after delay $D$, a user $u$ would need to submit both withdrawals within $D$.
    \item \textbf{Modification \#1}: All withdrawals suffer from a required delay of $D + O(1)$ in processing time, defined below. When a user submits the same nullifier to both instances of $S$, the system will find out about the respective action in time $D$, which is enough to cancel the withdrawal when detected.
    \item \textbf{Modification \#2}: When a relayer updates the nullifier set, we require that each new nullifier added is checked for existence on the contract already to find these duplicates mentioned above.
\end{itemize}
\begin{proposition}
With the modifications above, a user $u$ cannot double withdraw tokens.
\end{proposition}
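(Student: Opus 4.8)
The plan is to reduce ``double withdrawal from a single deposit'' to ``two accepting \textsc{Withdraw} calls exposing the same nullifier $\text{\mathfn sn}$,'' and then rule out the latter by a timing argument against the relay delay $D$. First I would establish the reduction. A single \textsc{Deposit}$(v,z)$ inserts exactly one leaf $z = H(r||s)$, and by the witness-extractability property of our zkSNARK protocol every proof $\pi_I$ accepted inside \textsc{Withdraw} yields a witness $(r,s,\textsc{path})$ with $\text{\mathfn sn} = H(r)$ whose leaf $H(r||s)$ verifies against an admitted root. Two withdrawals drawing on the same deposited leaf must therefore reveal the same $\text{\mathfn sn}$; to reveal a different nullifier $H(r')$ the adversary would need a second leaf $H(r'||s')$ equal to $H(r||s)$, i.e.\ a collision of $H$, which occurs with probability at most $\text{\mathfn negl}(\lambda)$ by collision-resistance. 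Hence, except with negligible probability, any double withdrawal consists of two accepting \textsc{Withdraw} calls carrying one common nullifier $\text{\mathfn sn}$.

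Next I would split on where the two calls land. If both are submitted to the same instance, say $S_B$, the first records $\text{\mathfn sn}$ in the local {\mathfn nullifiers} set and the check $\text{\mathfn sn}\notin\text{\mathfn nullifiers}$ in \textsc{Withdraw} rejects the second outright. If the calls are split across $S_A$ and $S_B$, let the later call be submitted on chain $X$ at time $t_{\text{late}}$ and the earlier on the other chain at $t_{\text{early}}\le t_{\text{late}}$. By Modification \#1 the later withdrawal does not release funds until $t_{\text{late}} + D + O(1)$, while by the bounded-relay assumption (a single uncensorable honest relayer guarantees latency at most $D$) the nullifier exposed by the earlier call is relayed into $X$'s {\mathfn nullifiers} set by time $t_{\text{early}} + D \le t_{\text{late}} + D < t_{\text{late}} + D + O(1)$. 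Thus the relayed duplicate arrives strictly inside the pending window of the later call, and by Modification \#2 the relayer's duplicate-insertion check fires and cancels that still-pending withdrawal. In every case at most one of the two withdrawals releases funds, so the user cannot double withdraw.

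The main obstacle I expect is making the timing inequality airtight rather than the cryptographic reduction, which is routine. Specifically, I must pin down exactly when a nullifier becomes relayable (at submission, not at fund-release), confirm that the $O(1)$ slack in the processing delay strictly dominates any residual relay and block-inclusion latency, and discharge the boundary case $t_{\text{early}} = t_{\text{late}}$ of simultaneous submission. All of these rest on $D$ being a genuine \emph{upper} bound on relay latency, which holds only under the stated single-honest-relayer, no-censorship model; I would flag this dependence explicitly, since if an adversary can stall the relay past the pending window then the cancellation guaranteed by Modification \#1 never triggers and the argument collapses.
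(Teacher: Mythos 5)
Your proof is correct and its core is the same race-condition argument the paper gives: reduce to two \textsc{Withdraw} calls carrying one nullifier, then use the relay bound $D$ together with Modification \#1 (the $D+O(1)$ pending window) and Modification \#2 (the duplicate check on relayed nullifiers) to show the duplicate is detected in time. The differences are refinements, and they work in your favor. First, you justify the reduction explicitly -- witness-extractability to pull out $(r,s,\textsc{path})$ and collision-resistance of $H$ to force equal nullifiers from a common leaf -- where the paper only waves at soundness to dismiss forgery. Second, you handle the same-chain case (both calls on $S_B$), which the paper silently omits; it is trivial given the local $\text{\mathfn sn}\notin\text{\mathfn nullifiers}$ check, but it belongs in a complete case split. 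Third, and most substantively, your conclusion is \emph{at most one withdrawal releases funds}, whereas the paper asserts in the staggered case that \emph{both} withdrawals are cancelled. The paper's stronger claim is actually dubious: if the user waits $t' > \epsilon$ between submissions, the earlier withdrawal's pending window ($t + D + \epsilon$) closes before the later nullifier can be relayed back (at $t + t' + D$), so the earlier withdrawal has already paid out and cannot be cancelled. Your formulation sidesteps this overclaim while still establishing exactly the security property the proposition needs, and your closing caveat -- that everything hinges on $D$ being a genuine upper bound on relay latency under the uncensorable-honest-relayer assumption -- makes explicit a dependence the paper leaves implicit.
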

\begin{proof}
Due to the soundness requirement, we don't worry about proof forgery as that occurs with negligable probability. Rather we restrict our attention to whether any race condition exists in allowing double withdrawals.

Suppose a user $u$ submits 2 withdrawals at time $t$ using the same serial number {\mathfn sn}. This means at time $t$ the nullifier is exposed on both $S_A$ and $S_B$ and appended to its respective collection. At time $t+D$, both $S_A$ and $S_B$ will have learned of new nullifiers exposed on each respective contract. When checking existence for each new update, there will be a duplicate at {\mathfn sn}, causing both withdrawals to be canceled since an equal action is taken on the other smart contract.

Without loss of generality, suppose $u$ submits the withdrawal at time $t$ on $A$ and waits $t'$ before submitting on $B$. {\mathfn sn} will be recorded on $A$ and when the withdrawal on $S_B$ is relayed at $t+t'+D$, it will be cancelled on $S_A$. Similarly, when {\mathfn sn} is relayed to $S_B$ at time $t+D$, it will result in a verifiable duplicate on $S_B$. Both deposits will be cancelled. 
\end{proof}
\begin{proposition}
A processing delay of $D+\epsilon$ for $\epsilon = O(1)$ is sufficient to secure the system against double withdrawals.
\end{proposition}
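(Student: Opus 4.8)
The plan is to show that a processing delay of $D + \epsilon$ with $\epsilon = O(1)$ suffices to rule out all double-withdrawal attacks, building directly on the two preceding propositions. The key conceptual point is that the delay $D$ is the maximum latency for a nullifier to propagate from one instance of $S$ to the other via the honest relayer, while the $\epsilon = O(1)$ term absorbs the constant-time bookkeeping overhead of the duplicate-detection check in Modification \#2. So the structure of the argument is: first establish that any successful double-withdrawal requires two accepted \textsc{Verify\_Snark} calls on the same nullifier \text{\mathfn sn}; then argue that any such pair is necessarily detected and cancelled within the window $D + \epsilon$.

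First I would dispose of the cryptographic-forgery route exactly as in the previous proof: by the soundness and simulation-extractability of the zkSNARK (Proposition~1), a withdrawal with a nullifier not corresponding to a genuine deposit succeeds only with probability $\text{\mathfn negl}(\lambda)$, so we may condition on all accepted proofs carrying well-formed, deposit-derived nullifiers. This reduces the problem to the purely timing-based race condition, matching the framing already used for the preceding double-withdraw proposition. Next I would invoke the observation that two withdrawals sharing the same \text{\mathfn sn} can only both clear if neither side has yet learned of the other's nullifier, which by the relayer-latency assumption means both must be submitted strictly inside a window of length $D$. Then I would argue that $D$ is both necessary and sufficient for \emph{detection}: necessary because a shorter guaranteed-detection bound would contradict the worst-case relayer delay, and sufficient because after exactly $D$ the honest, uncensorable relayer has delivered the nullifier set update (this is where the bound on $D$ from the Model section — a single honest relayer who cannot be censored — is essential).

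The additive $\epsilon = O(1)$ would then be justified as the constant cost of the Modification \#2 existence check: when the relayed nullifier set arrives, the contract performs an $O(1)$ membership test (since smart contracts are assumed to have $O(1)$ read/write access, per the Preliminaries) and, upon finding \text{\mathfn sn} already present, cancels the withdrawal. Combining these, any colliding pair is caught and cancelled by time $t + D + \epsilon$, and by the previous proposition the cancellation is applied symmetrically on both $S_A$ and $S_B$, so no net double-spend survives; conversely, honest single withdrawals incur only this same $D + \epsilon$ processing delay and still succeed by Proposition~3's completeness argument, so the bound is tight.

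The main obstacle I anticipate is making the $\epsilon$ bound genuinely $O(1)$ rather than something that secretly scales — one must be careful that the duplicate check is over a single incoming nullifier against an $O(1)$-access store, not a linear scan of the whole history, and that the $O(1)$ read/write assumption from the Preliminaries actually covers set-membership. A secondary subtlety is pinning down that the worst-case latency is \emph{exactly} $D$ and not $D$ plus additional consensus-confirmation slack; I would fold any such constant confirmation overhead into $\epsilon$ and state explicitly that $\epsilon$ collects all constant-time contributions, so that the clean separation between the propagation term $D$ and the processing term $\epsilon = O(1)$ holds.
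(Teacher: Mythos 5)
Your proposal is correct and follows essentially the same route as the paper's own proof: the $D$ term is justified by the previous proposition (relayed nullifier information about any colliding pair arrives within $D$, after which both withdrawals are cancelled), and the $\epsilon = O(1)$ term is exactly the constant-time existence check of Modification \#2, with the same caveat you raise --- the paper likewise remarks that if the array lookup is not $O(1)$, one swaps in a data structure with $O(1)$ search. Your additional material (re-deriving the soundness disposal, arguing necessity and tightness of $D$) is embellishment beyond what the paper does, but it does not change the argument's structure.
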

\begin{proof}
We remark that if the time needed to check existence of a value in the array data structure is not $O(1)$ then we will switch it for a data structure with $O(1)$ search lookups. From above, we showed that if withdrawals were delayed by $D$ time then at least information about a double withdrawal will be transmitted through the relayer system. The earliest withdrawal transaction would be cancelled, followed by the later withdrawal being cancelled. Thus any additional constant $\epsilon$ of time is sufficient for delaying the possibility of not seeing the necessary information to detect a double withdrawal.
\end{proof}
\subsection{Storage Complexity}
Storing data for each of the necessary components is not for free. The storage complexity of our contract is on the order of the light-client storage complexity and the bridge functionality, which grow linearly in the presence of new information. However, light-client headers are likely to be larger in size that nullifiers and merkle roots, therefore the storage complexity $O(S)=O(\text{a light client})$ is on the order of the light client's storage requirements.

There are potential ways of reducing storage however, since blockchains are resource bound and optimisations are worth implementing in production. We describe future work at the end of the paper.
\subsection{A free mixer}
In our protocol, we describe a modification to the zkSNARK described in the merkle tree algorithms, specifically one that takes as input 2 merkle roots. A keen reader might notice that proving the OR relation of membership, in the native blockchain or in the target blockchain, generalises the capabilities for users with support for the functionailty of a mixer as well.
Users who deposit on $S_A$ can also withdraw from $S_A$ or $S_B$, and they cannot reuse a nullifer for this action between these nullifiers are tracked in a growing list {\mathfn nullifiers} that is kept up to date with delay $D$ on both instances of $S$.

Why is this a good thing? Users of blockchains $A$ or $B$ utilising $\$T$ can utilise the services of the bridge mechanism for mixer purposes specifically, effectively creating a larger anonymity set for all the users combined. This is due to the fact that our proofs are proofs over a combination of merkle tree accumulators rather than a single one.
\section{Incentives}
Recent work from \cite{le2020amrautonomous} as well as Tornado Cash \cite{TornadoC45:online} defined various schemes for incentivising liquidity into mixers. In this paper, we present an additional mechanism we could layer similar incentives onto. By participating on both sides of the bridge, users would earn tokens, potentially to shielded addresses, where rewards scale with the length they remained locked in their respective side of the bridge. We document a naive mechanism for issuing rewards below that is not optimal for privacy and discuss improvements thereafter.

\subsection{A naive approach}
The naive approach to rewards is to reward users for locking for at least a time $t$. A user can prove that their deposit has been in the system for that time by proving the deposit is accumulated in an older merkle root, one created before $t$ time from current root.

This approach may limit the privacy of the system in practice, since users would likely open themselves up to fingerprinting attacks using statistical timing techniques. As is referenced in the Tornado Cash anonymity mining scheme, it is thus necessary to reward users where these parameters are kept as private inputs to a zkSNARK circuit. Additionally, we would require that not only the time locked remains private but also the amount earned, such that rewards are issued to shielded addresses. We leave further investigation into this schemes to future work.

\subsection{Vampire Attacks}
A vampire attack is an homage to a vampire sucking blood from its victim. These bridge mechanism can be used in a similar fashion, to suck liquidity onto new blockchains. Since users are rewarded for providing liquidity on each side of the bridge in order to facilitate private transfers, users are incentivised to lock up tokens on both sides, causing liquidity to distribute across the bridge.

\section{Conclusion}
\subsection{Discussion}
In this paper, we outlined a mechanism for executing privacy-preserving bridge transfers by utilising 2 mixer-like mechanisms on both blockchains of the bridge. We briefly discussed the storage concerns and incentives of using this mechanism by rewarding an individual with a governance token that scales with their time locked on each side of the bridge.

\subsection{Future Directions}
It is worth considering optimisations to the storage complexity of the system. While we leave proofs of this to future work, we note that non-membership proofs present interesting tools for proving an element's non-membership in an accumulated value. One can consider merkle-izing the set of nullifiers and verfiably tracking the latest root of all reported nullifiers. A user's withdraw proof could then additionally prove that a serial number {\mathfn sn} is not accumulated in {\mathfn nullifiers} merkle root, eliminating the need to store a potentially large collection of data. \\ \\
Additonally, there is interesting incentive work to be done. For example, when rewards are different on each side of the bridge, liquidity might flow towards these incentives. The effects of this scheme can be investigated to understand how liquidity of cryptocurrency assets across different blockchains is affected.

\printbibliography

\end{document}